\begin{document}

\title{The Weighted Barycenter Drawing Recognition Problem}
\author{Peter Eades\inst{1} \and  Patrick Healy\inst{2} \and Nikola S. Nikolov\inst{2}}

\institute{
University of Sydney,
\email{peter.d.eades@gmail.com} \and University of Limerick \email{patrick.healy,nikola.nikolov@ul.ie}}

\maketitle


\begin{abstract}
We consider the question of whether a given graph drawing $\Gamma$ of a triconnected planar graph $G$ is a weighted barycenter drawing. We answer the question with an elegant arithmetic characterisation using the faces of $\Gamma$. This leads to positive answers when the graph is a Halin graph,
and to a polynomial time recognition algorithm when the graph is cubic.
\end{abstract}

\section{Introduction}\label{se:intro}

The \emph{barycenter algorithm} of Tutte~\cite{tutte60,tutte63} is one of the earliest and most elegant of all graph drawing methods. It takes as input a graph $G=(V,E)$, a subgraph $F_0 = (V_0, E_0) $ of $G$, and a position $\gamma_a$ for each $a \in V_0$. The algorithm simply places each vertex $v \in V-V_0$ at the barycenter of the positions of its neighbours. The algorithm can be seen as the grandfather of force-directed graph drawing algorithms, and can be implemented easily by solving a system of linear equations. If $G$ is a planar triconnected graph, $F_0$ is the outside face of $G$, and the positions $\gamma_a$ for $a \in V_0$ are chosen so that $F_0$ forms a convex polygon, then the drawing output by the barycenter algorithm is planar and each face is convex.

The barycenter algorithm can be generalised to planar graphs with positive edge weights, placing each vertex $i$ of $V-V_0$ at the weighted barycenter of the neighbours of $i$. This generalisation preserves the property that the output is planar and convex~\cite{DBLP:journals/cagd/Floater97a}. Further, weighted barycenter methods have been used in a variety of theoretical and practical contexts~\cite{DBLP:conf/gd/FraysseixM03,DBLP:journals/comgeo/VerdierePV03,DBLP:journals/corr/abs-0708-0964,Thomassen83}. Examples of weighted barycenter drawings
(the same graph with different weights)
are in Fig.~\ref{fi:example}.

In this paper we investigate the following question: given a straight-line planar drawing $\Gamma$ of a triconnected planar graph $G$, can we compute weights for the edges of $G$ so that $\Gamma$ is the weighted barycenter drawing of $G$? We answer the question with an elegant arithmetic characterisation, using the faces of $\Gamma$. This yields positive answers when the graph is a Halin graph,
and leads to a polynomial time algorithm when the graph is cubic.

Our motivation in examining this question partly lies in the elegance of the mathematics, but it was also posed to us by Veronika Irvine (see \cite{DBLP:conf/gd/BiedlI17,tesselace}), who needed the characterisation to to create and classify ``grounds''
for bobbin lace drawings; this paper is a first step in this direction.
Further, we note that our result relates to the problem of morphing from one planar graph drawing to another (see \cite{DBLP:conf/gd/Barrera-CruzHL14,FloaterGotsman}). Previous work has characterised drawings that arise from the Schnyder algorithm (see~\cite{DBLP:conf/wg/BonichonGHI10}) in this context.
Finally, we note that this paper is the first attempt to characterise drawings that are obtained from force-directed methods.
\begin{figure}
  \centering
  \includegraphics[width=0.6\columnwidth]{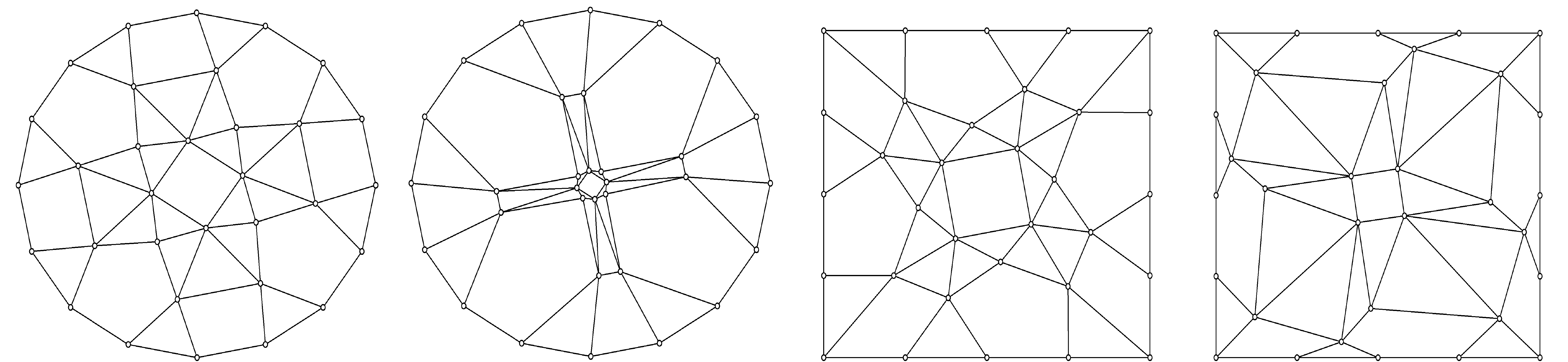}
  \caption{
Weighted barycenter drawings of the same graph embedding with different weights.
  }
  \label{fi:example}
\end{figure}

\section{Preliminaries: the weighted barycenter algorithm}\label{se:prelim}

Suppose that $G = (V,E)$ denotes a triconnected planar graph and $w$ is a 
\emph{weight function} that assigns a non-negative real weight $w_{ij}$ to each edge $(i,j) \in E$.
We assume that the weights are positive unless otherwise stated.
We denote $|V|$ by $n$ and $|E|$ by $m$.
In this paper we discuss planar straight-line drawings of such graphs;
such a drawing $\Gamma$ is specified by a position $\gamma_i$ for each vertex $i \in V$.
We say that $\Gamma$ is \emph{convex} if every face is a convex polygon.

Throughout this paper, $F_0$ denotes the outer face of a plane graph $G$.
Denote the number of vertices on $F_0$ by $f_0$.
In a convex drawing, the edges of $F_0$ form a simple convex polygon $P_0$.
Some terminology is convenient: we say that an edge or vertex on $F_0$ is \emph{external};
a vertex that is not external is \emph{internal};
a face $F$ (respectively edge, $e$) is \emph{internal} if $F$ (resp. $e$) is incident to an internal vertex,
and \emph{strictly internal} if every vertex incident to $F$ (resp. $e$) is internal.

The \emph{weighted barycenter algorithm} takes as input a triconnected planar graph $G = (V,E)$ with a weight function $w$,
together with $F_0$ and $P_0$, and produces a straight-line drawing $\Gamma$ of $G$ with $F_0$ drawn as $P_0$.
Specifically, it assigns a position $\gamma_i$ to each internal vertex $i$ such that $\gamma_i$ is the weighted barycenter of its neighbours in $G$.
That is:
\begin{equation}
\label{eq:bary0}
\gamma_i = \frac{1}{\sum_{j \in N(i)} w_{ij}} \sum_{j \in N(i)} w_{ij} \gamma_j
\end{equation}
for each internal vertex $i$. Here $N(i)$ denotes the set of neighbours of $i$.
If $\gamma_i = (x_i, y_i)$ then (\ref{eq:bary0}) consists of $2(n-f_0)$ linear equations in the $2(n-f_0)$ unknowns $x_i , y_i$.
The equations (\ref{eq:bary0}) are called the \emph{(weighted) barycenter equations} for $G$.
Noting that the matrix involved is a submatrix of the Laplacian of $G$, one can show that the equations have a unique solution
that can be found by traditional (see for example~\cite{Trefethen97})
or specialised (see for example~\cite{DBLP:journals/siamcomp/SpielmanT11}) methods.

The weighted barycenter algorithm, which can be viewed as a force directed method, was defined by Tutte~\cite{tutte60,tutte63} and extended by Floater~\cite{DBLP:journals/cagd/Floater97a};
the classic theorem says that the output is planar and convex:
\begin{theorem}
(Tutte~\cite{tutte60,tutte63}, Floater~\cite{DBLP:journals/cagd/Floater97a})
The drawing output by the weighted barycenter algorithm is planar, and each face is convex.
\label{th:tutte}
\end{theorem}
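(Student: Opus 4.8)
The plan is to prove this classical theorem by Tutte's original method, in the form streamlined and extended to positive weights by Floater. Since the barycenter equations (\ref{eq:bary0}) have a unique solution, $\Gamma$ is well defined, and it remains to show that it is a planar straight-line embedding with every face convex. The starting point is a discrete \emph{maximum principle}: for any linear functional $\phi(x,y)=ax+by$, the numbers $\phi(\gamma_i)$ satisfy at every internal vertex $i$ the same averaging identity as (\ref{eq:bary0}), with strictly positive coefficients, so $\phi\circ\gamma$ has no strict local extremum at an internal vertex; in particular each internal vertex lies in the relative interior of the convex hull of its neighbours, and $\max_V(\phi\circ\gamma)$ and $\min_V(\phi\circ\gamma)$ are attained on $F_0$.

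The combinatorial heart of the proof is the following lemma: \emph{for every line $\ell$ and each open half-plane $H$ bounded by $\ell$, the vertices $v$ with $\gamma_v\in H$ induce a connected subgraph of $G$, or none at all}. I would prove it as follows. Take a component $K$ of this vertex set and let $M$ be the set of vertices of $K$ at maximal (positive) distance from $\ell$. If $w\in M$ is internal, then the signed distance of $\gamma_w$ from $\ell$, being a positively weighted average of the signed distances of the $\gamma_j$ over $j\in N(w)$ — each of which is at most that of $\gamma_w$, since a neighbour in $H$ lies in $K$ and a neighbour outside $H$ has non-positive signed distance — forces every neighbour of $w$ to lie in $M$. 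Hence if $M$ contained no external vertex, $M$ would be a union of connected components of $G$, forcing $M=V$ and contradicting that $P_0$ is a non-degenerate polygon; so every component $K$ meets $F_0$. But the external vertices lying in $H$ are exactly the vertices of the convex polygon $P_0$ that lie in $H$, and by convexity these form a single sub-path of $P_0$, hence lie in one component. So there is only one component.

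It then remains to turn this into the statement of the theorem. Applying the maximum principle to the supporting functionals of the edges of $P_0$ shows no internal vertex lies outside $P_0$; upgrading this to \emph{strictly inside}, and excluding the degeneracies that would break convexity — adjacent vertices drawn at the same point, or a vertex all of whose incident edges are collinear — is done by a separation argument: such a configuration yields a vertex set $C$ containing an internal vertex, all of whose edges to the rest of $G$ leave through at most two vertices, so its removal disconnects $G$; since $G$ is triconnected, this is impossible. With non-degeneracy in hand, applying the connectivity lemma to the line through the two endpoints of an edge shows that the two faces incident to that edge lie on opposite sides of its supporting line; this forces the cyclic order of the neighbours of each internal vertex around its position to match the rotation system, so each vertex star is drawn injectively with convex corners, and a standard degree/covering argument — anchored by the fact that $\Gamma$ maps the outer walk $F_0$ onto the simple convex polygon $P_0$ — promotes this to the conclusion that $\Gamma$ is a planar straight-line embedding; convexity of each bounded face is then read off from the opposite-sides property of its edges, and the outer face is convex by hypothesis. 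The step I expect to be the real obstacle is the separation argument ruling out degeneracies: it is the only place where triconnectivity of $G$ is essential (for merely biconnected graphs the drawing genuinely can degenerate), and isolating the small cut cleanly — especially in the case of two coincident adjacent vertices — is delicate. An alternative route that sidesteps some of this is the Maxwell--Cremona approach: interpret $w$ as an equilibrium stress, lift $\Gamma$ to a piecewise-linear surface, and use positivity of the stress together with Steinitz-type arguments to see that the surface is convex and its projection a valid convex subdivision.
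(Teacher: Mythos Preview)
The paper does not prove Theorem~\ref{th:tutte}; it states the result with attribution to Tutte and Floater and uses it as background. Your outline is the standard Tutte--Floater argument (maximum principle, half-plane connectivity, triconnectivity to rule out degeneracies, then local injectivity and a covering count), and it is essentially correct, so there is nothing to compare against here.
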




\section{The Weighted Barycenter Recognition Problem}
\label{se:TheProblemt}

This paper discusses the problem of finding weights $w_{ij}$ so that a given drawing is
the weighted barycenter drawing with these weights.
More precisely, we say that a drawing $\Gamma$ is a \emph{weighted barycenter drawing}
if there is a positive weight $w_{ij}$ for each internal edge $(i,j)$
such that for each internal vertex $i$, equations (\ref{eq:bary0}) hold.
\begin{description}
\item [{\bf The Weighted Barycenter Recognition problem}]
\item [{\em Input:}] A straight-line planar drawing $\Gamma$ of a triconnected plane graph $G = (V,E)$,
such that the vertices on the convex hull of $\{ \gamma_i : i \in V \}$ form a face of $G$.
\item [{\em Question:}] Is $\Gamma$ a weighted barycenter drawing?
\end{description}
Thus we are given the location $\gamma_i = (x_i , y_i)$ of each vertex,
and we must compute a positive weight $w_{ij}$ for each edge so that
the barycenter  equations (\ref{eq:bary0}) hold for each internal vertex.

Theorem~\ref{th:tutte} implies that if $\Gamma$ is a weighted barycenter drawing, then each face of the drawing is convex; however, the converse is false, even for triangulations (see Appendix).

\section{Linear Equations for the Weighted Barycenter Recognition problem}

In this section we show that the weighted barycenter recognition problem can be expressed in terms of linear equations.
The equations use \emph{asymmetric} weights $z_{ij}$ for each edge $(i,j)$; that is, $z_{ij}$ is not necessarily the same as $z_{ji}$.
To model this asymmetry we replace each undirected edge $(i,j)$ of $G$ with two directed edges $(i,j)$ and $(j,i)$; this gives a directed graph
$\overrightarrow{G} = (V,\overrightarrow{E})$.
For each vertex $i$, let $N^+(i)$ denote the set of \emph{out-neighbours} of $i$; that is,
$N^+(i) = \{j \in V : (i,j) \in \overrightarrow{E} \}.$

Since each face is convex, each internal vertex is inside the convex hull of its neighbours.
Thus each internal vertex position is a convex linear combination of the vertex positions of its neighbours.
That is, for each internal vertex $i$ there are non-negative weights $z_{ij}$ such that
\begin{equation}
\label{eq:bary3}
 \sum_{j \in N^+(i)}  z_{ij}   = 1 \text{~~~~and~~~~} \gamma_i = \sum_{j \in N^+(i)} z_{ij}  \gamma_j  .
\end{equation}
The values of  $z_{ij}$ satisfying (\ref{eq:bary3}) can be determined in linear time.
For a specific vertex $i$, the $z_{ij}$ for $j \in N^+(i)$ can be viewed as a kind of \emph{barycentric coordinates} for $i$.
In the case that $|N^+(i)| = 3$, these coordinates are unique.

Although equations (\ref{eq:bary0}) and (\ref{eq:bary3}) seem similar, they are not the same: one is directed, the other is undirected.
In general $z_{ij} \neq z_{ji}$ for directed edges $(i,j)$ and $(j,i)$, while the weights $w_{ij}$ satisfy $w_{ij} = w_{ji}$.
However we can choose a ``scale factor'' $s_i > 0$ for each vertex $i$,
and scale equations (\ref{eq:bary3}) by $s_i$. That is, for each internal vertex $i$,
\begin{equation}
\label{eq:baryScaled}
\gamma_i = \frac{1}{  \sum_{j \in N^+(i)} s_i  z_{ij} }  \sum_{j \in N^+(i)} s_i z_{ij}  \gamma_j  .
\end{equation}
The effect of this scaling is that we replace $z_{ij}$ by $s_i z_{ij}$ for each edge $(i,j)$.

We would like to choose a scale factor $s_i > 0$ for each internal vertex $i$ such that 
for each strictly internal edge $(i,j) \in E$,
$ s_i z_{ij} = s_j z_{ji}$;
that is, we want to find a real positive $s_i$ for each internal vertex $i$ such that
\begin{equation}
\label{eq:scale}
s_i z_{ij} - s_j z_{ji} = 0
\end{equation}
for each strictly internal edge $(i,j)$.

It can be shown easily that the existence of any nontrivial solution to (\ref{eq:scale}) implies the existence of a positive solution (see Appendix).

We note that any solution of (\ref{eq:scale}) for strictly internal edges gives weights $w_{ij}$ such that the barycenter equations (\ref{eq:bary0}) hold. We choose $w_{ij} = s_i z_{ij}$ for each (directed) edge $(i,j)$ that is incident to an internal vertex $i$. Equations (\ref{eq:scale}) ensure that $w_{ij} = w_{ji}$ for each strictly internal edge.
For edges which are internal but not strictly internal,
we can simply choose $w_{ij} = s_i z_{ij}$ for any value of $s_i$, since $z_{ji}$ is undefined.

Thus if equations (\ref{eq:scale}) have a nontrivial solution, then the drawing is a weighted barycenter drawing.

\subsubsection{The main theorem.}

We characterise the solutions of equations (\ref{eq:scale}) with an arithmetic condition on the faces of $\Gamma$.
This considers the product of the weights $z_{ij}$ around directed cycles in $G$: if the product around each strictly internal face in the clockwise direction is the same as the product in the counter-clockwise direction, then equations (\ref{eq:scale}) have a nontrivial solution.
\begin{theorem}
\label{th:cycleProduct}
Equations (\ref{eq:scale}) have a nontrivial solution if and only if for each strictly internal face $C = (v_0, v_1, \ldots , v_{k-1},  v_k = v_0)$ in $G$, we have
\begin{equation}
\label{eq:cycle}
\prod_{i=0}^{k-1} z_{v_i,v_{i+1}} = \prod_{i=1}^{k} z_{v_{i},v_{i-1}}.
\end{equation}
\end{theorem}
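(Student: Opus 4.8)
The plan is to recognize equations~(\ref{eq:scale}) as a statement about a potential (the $\log s_i$) on the graph whose defining edge-constraints are $\log s_i - \log s_j = \log(z_{ji}/z_{ij})$. Taking logarithms transforms the multiplicative system into an additive one: setting $t_i = \log s_i$ and $c_{ij} = \log(z_{ji}/z_{ij})$ for each strictly internal edge, equation~(\ref{eq:scale}) becomes $t_i - t_j = c_{ij}$, where $c_{ij} = -c_{ji}$. This is a system of the familiar ``assign potentials realizing prescribed differences'' type. The standard fact is that such a system is solvable precisely when the prescribed differences $c_{ij}$ are \emph{conservative}, i.e.\ sum to zero around every cycle in the relevant (sub)graph. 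Exponentiating, conservativity around a cycle $C = (v_0, v_1, \ldots, v_k = v_0)$ is exactly $\prod_i z_{v_i, v_{i+1}} = \prod_i z_{v_{i+1}, v_i}$, which is~(\ref{eq:cycle}). So the real content is to pass from ``zero around every cycle'' to ``zero around every strictly internal face''.

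First I would make precise which graph the potentials live on. Because equations~(\ref{eq:scale}) only constrain strictly internal edges, and the unknowns $s_i$ only range over internal vertices, the relevant object is the subgraph $H$ induced by the strictly internal edges (equivalently, spanned by the internal vertices after deleting edges touching the outer face). I would argue that in a triconnected plane graph this subgraph, together with the faces of $\Gamma$ lying inside it, behaves like a plane graph whose bounded faces are exactly the strictly internal faces of $G$ — this is where triconnectivity and the hypothesis that the outer boundary is a face get used, to rule out pathological separations. Then the forward direction is immediate: if a nontrivial (hence, by the cited Appendix fact, positive) solution $s_i$ exists, then for any strictly internal face $C$ we multiply the identities $s_{v_i} z_{v_i, v_{i+1}} = s_{v_{i+1}} z_{v_{i+1}, v_i}$ around $C$; the $s$-factors telescope and cancel, leaving~(\ref{eq:cycle}).

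For the converse, I would use the topological fact that the strictly internal faces generate the cycle space of the plane (sub)graph $H$: every cycle in $H$ is an edge-disjoint-in-$\mathbb{Z}/2$, and in fact an integer, combination of facial cycles. Concretely, any cycle $Z$ in $H$ bounds a region that is a union of bounded faces, and traversing $Z$ once equals the signed sum of traversing each enclosed facial cycle once, with interior edges counted once in each direction and thus cancelling. Applying this to the additive differences $c_{ij}$: since $\sum_{e \in C} c_e = 0$ for each facial cycle $C$ (the logarithm of~(\ref{eq:cycle})), linearity gives $\sum_{e \in Z} c_e = 0$ for every cycle $Z$. Hence $c$ is conservative on $H$, so fixing an arbitrary internal vertex $r$ in each connected component of $H$ and setting $t_i$ equal to the (well-defined, path-independent) sum of $c_e$ along any path from $r$ to $i$ yields a solution of $t_i - t_j = c_{ij}$; put $s_i = e^{t_i} > 0$. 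This is a nontrivial (indeed positive) solution of~(\ref{eq:scale}), completing the proof.

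The main obstacle I anticipate is the book-keeping in the converse: carefully justifying that the facial cycles of the strictly-internal subgraph $H$ span its cycle space, and in particular that $H$ is connected enough (or handling each component separately) and that its bounded faces really are the strictly internal faces of $G$ rather than larger regions created by deleting the outer layer. Triconnectivity of $G$ should tame this, but the argument must confirm that no strictly internal face gets merged with another region when the non-strictly-internal edges are removed, so that~(\ref{eq:cycle}) over strictly internal faces is genuinely enough.
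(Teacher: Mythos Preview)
Your proof is correct and essentially identical to the paper's: the paper works multiplicatively with the ratios $\zeta_{ij}=z_{ji}/z_{ij}$ where you take logarithms, proves the forward direction by telescoping around a face, and for the converse extends the face condition to all strictly internal cycles via the facial cycle basis and then defines $s_i$ as the product of $\zeta$'s along a spanning (DFS) tree from a root vertex. The paper also sidesteps your anticipated difficulty with the subgraph $H$ by arguing directly in $G$: any strictly internal cycle encloses only strictly internal faces, so the hypothesis on faces already propagates to every strictly internal cycle without needing to analyze the face structure of $H$.
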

\begin{proof}

For convenience we denote $\frac{z_{ji}}{z_{ij}}$ by $\zeta_{ij}$ for each directed edge $(i,j)$; note that $\zeta_{ij} = 1/\zeta_{ji}$. Equations (\ref{eq:scale}) can be re-stated as
\begin{equation}
\label{eq:zeta}
s_i - \zeta_{ij} s_j = 0
\end{equation}
for each strictly internal edge $(i,j)$, and the equations (\ref{eq:cycle}) for cycle $C$ can be re-stated as 
\begin{equation}
\label{eq:cycleZeta}
\prod_{i=0}^{k-1} \zeta_{v_i,v_{i+1}} = 1.
\end{equation}
First suppose that equations (\ref{eq:zeta}) have nontrivial solutions $s_i$ for all internal vertices $i$,
and $C = (v_0, v_1, \ldots , v_{k-1},  v_k = v_0)$ is a strictly internal face in $G$.
Now applying (\ref{eq:zeta}) around $C$ clockwise beginning at $v_0$, we can have:
$$
s_{v_0} = \zeta_{v_0,v_1} s_{v_1} ~ = ~ \zeta_{v_0,v_1}  \zeta_{v_1,v_2}  s_{v_2} ~ = ~ \zeta_{v_0,v_1}  \zeta_{v_1,v_2}  \zeta_{v_2,v_3} s_{v_3} ~ = ~\ldots
$$
We can deduce that
\begin{equation*}
s_{v_0} = \left(   \prod_{i=0}^{j-1}   \zeta_{v_i,v_{i+1}}   \right)  s_{v_j} = \left(   \prod_{i=0}^{k-1}   \zeta_{v_i,v_{i+1}}   \right)  s_{v_k}
= \left(   \prod_{i=0}^{k-1}   \zeta_{v_i,v_{i+1}}   \right)  s_{v_0}
\end{equation*}
and this yields equation (\ref{eq:cycleZeta}).

Now suppose that equation 
(\ref{eq:cycleZeta}) holds for every strictly internal facial cycle of $G$.
We first show that equation (\ref{eq:cycleZeta}) holds for \emph{every} strictly internal cycle.
Suppose that (\ref{eq:cycleZeta}) holds for two cycles $C_1$ and $C_2$ that share a single edge, $(u,v)$,
and let $C_3$ be the sum of $C_1$ and $C_2$ (that is, $C_3 = ( C_1 \cup C_2 ) - \{ (u,v) \}$).
Now traversing $C_3$ in clockwise order gives the clockwise edges of $C_1$ (omitting $(u,v)$)
followed by the 
clockwise edges of $C_2$ (omitting $(v,u)$).
But from equation (\ref{eq:cycleZeta}),
the product of the edge weights $\zeta_{ij}$ in the clockwise order around $C_1$ is one,
and the product of the edge weights $\zeta_{i'j'}$ in the 
clockwise order around $C_2$ is one.
Thus the product of the edge weights $\zeta_{ij}$ in clockwise order around $C_3$ is $\frac{1}{\zeta_{uv} \zeta_{vu} } = 1$.
%
That is, (\ref{eq:cycleZeta}) holds for $C_3$.
Since the facial cycles form a cycle basis, it follows that (\ref{eq:cycleZeta}) holds for every cycle.

Now choose a reference vertex $r$, and consider a depth first search tree $T$ rooted at $r$.
Denote the set of directed edges on the directed path in $T$ from $i$ to $j$ by $E_{ij}$.
Let $s_{r} = 1$, and
for each internal vertex $i \neq r$, let 
\begin{equation}
\label{eq:s}
s_{i} = \prod_{(u,v) \in E_{ri}} \zeta_{uv}.
\end{equation}
Clearly equation~(\ref{eq:zeta}) holds for every edge of $T$.
Now consider a back-edge $(i,j)$ for $T$ (that is, a strictly internal edge of $G$ that is not in $T$),
and let $k$ denote the least common ancestor of $i$ and $j$ in $T$.
Then from (\ref{eq:s}) we can deduce that
\begin{equation}
\label{eq:moo1}
\frac{s_i}{s_j} = \frac{ \prod_{(u,v) \in E_{ri}} \zeta_{uv} }{ \prod_{(u',v')\in E_{rj}} \zeta_{u'v'} }
= \frac{ \prod_{(u,v) \in E_{ki}} \zeta_{uv} }{ \prod_{(u',v')\in E_{kj}} \zeta_{u'v'} }.
\end{equation}
Now let $C$ be the cycle in $\Gamma$ that consists of the reverse of the directed path in $T$ from $k$ to $j$,
followed by the directed path in $T$ from $k$ to $i$, followed by the edge $(i,j)$.
Since equation~(\ref{eq:cycleZeta}) holds for $C$, we have:
\begin{equation}
\label{eq:moo2}
1
= \left( \prod_{(v',u') \in E_{jk}} \zeta_{v'u'} \right) \left( \prod_{(u,v) \in E_{ki}} \zeta_{uv} \right) \zeta_{ij} 
= \left( \frac{  \prod_{(u,v) \in E_{ki}} \zeta_{uv} } { \prod_{(u'v') \in E_{kj}} {\zeta_{u'v'}} } \right)  \zeta_{ij} 
\end{equation}
Combining equations (\ref{eq:moo1}) and (\ref{eq:moo2}) we have $s_i = \zeta_{ij} s_j$
and so equation (\ref{eq:zeta}) holds for each back edge $(i,j)$.
We can conclude that (\ref{eq:zeta}) holds for all strictly internal edges.
\qed

\end{proof}

\section{Applications}

We list some implications of Theorem~\ref{th:cycleProduct} for cubic, Halin~\cite{DBLP:journals/jgaa/Eppstein16} and planar graphs with degree larger than three. Proofs of the corollaries below are straightforward.

\begin{corollary}
\label{cor:cubic1}
A drawing $\Gamma$ of a cubic graph is a weighted barycenter drawing if and only if equations (\ref{eq:scale}) have  rank smaller than $n-f_0$.
\qed
\end{corollary}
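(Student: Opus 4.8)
The plan is to reduce the corollary to the following equivalence, valid for a cubic graph: (i) $\Gamma$ is a weighted barycenter drawing, and (ii) the homogeneous linear system (\ref{eq:scale}) has a nontrivial solution. Granting that equivalence the corollary is immediate: (\ref{eq:scale}) is a homogeneous linear system with one equation per strictly internal edge in the $n-f_0$ unknowns $\{ s_i : i \text{ internal} \}$, so by rank--nullity it has a nonzero solution exactly when the rank of its coefficient matrix is strictly less than $n-f_0$.

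For the direction (ii)$\Rightarrow$(i) I would simply invoke the discussion preceding Theorem~\ref{th:cycleProduct}: a nontrivial solution of (\ref{eq:scale}) can be rescaled to a positive one $(s_i)$ (the fact established in the Appendix), and then setting $w_{ij} := s_i z_{ij}$ on the internal edges yields positive symmetric weights for which the barycenter equations (\ref{eq:bary0}) hold. Nothing about the degree is used here.

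For (i)$\Rightarrow$(ii), which is where cubicness does the work, I would start from positive symmetric weights $w_{ij}=w_{ji}$ realising $\Gamma$ and put $s_i := \sum_{j \in N(i)} w_{ij} > 0$ for each internal vertex $i$. Then $(w_{ij}/s_i)_{j \in N(i)}$ is a tuple of non-negative reals summing to $1$ that writes $\gamma_i$ as a convex combination of its neighbours, i.e.\ it satisfies (\ref{eq:bary3}). Since $i$ has exactly three neighbours — and in a valid planar straight-line drawing the three neighbours of a degree-$3$ vertex cannot be collinear — the barycentric coordinates are unique, so $z_{ij} = w_{ij}/s_i$, i.e.\ $w_{ij} = s_i z_{ij}$, for every internal edge. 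Reading $w_{ij} = w_{ji}$ for a strictly internal edge $(i,j)$ then gives $s_i z_{ij} = s_j z_{ji}$, which is precisely (\ref{eq:scale}); hence $(s_i)$ is a positive, in particular nontrivial, solution.

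The only genuinely non-formal point, and the one place I expect friction, is the uniqueness of those barycentric coordinates, which is exactly what the cubic hypothesis buys. For a vertex of degree greater than three the $z_{ij}$ extracted from $\Gamma$ need not coincide with the $w_{ij}/s_i$ arising from an arbitrary barycenter realisation, so the equivalence (i)$\Leftrightarrow$(ii) — and with it this clean rank criterion — would no longer hold; keeping the argument honestly confined to the cubic case is the thing to watch.
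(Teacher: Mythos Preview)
Your argument is correct and is essentially the same as the paper's, only more explicit: the paper's proof amounts to the single observation that in the cubic case the $z_{ij}$ are unique, so if (\ref{eq:scale}) has only the trivial solution then $\Gamma$ cannot be a weighted barycenter drawing, with the converse direction already having been established in the text before Theorem~\ref{th:cycleProduct}. Your write-up simply unpacks both directions, including the construction $s_i = \sum_{j\in N(i)} w_{ij}$ and the rank--nullity translation, which the paper leaves implicit.
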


\begin{corollary}
\label{cor:cubic2}
For cubic graphs, there is a linear time algorithm for the weighted barycenter recognition problem.
\qed
\end{corollary}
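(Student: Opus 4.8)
The plan is to read the algorithm straight off Theorem~\ref{th:cycleProduct} (equivalently Corollary~\ref{cor:cubic1}), exploiting the fact that in a cubic graph every internal vertex $i$ has exactly three neighbours, so the barycentric coordinates $z_{ij}$ of~(\ref{eq:bary3}) are \emph{unique} and each triple is computable in constant time. Concretely, the algorithm runs in three phases. First, extract the faces from the given combinatorial embedding and mark the external vertices, the strictly internal edges, and the strictly internal faces, together with the cyclic vertex list of each face; since $G$ is planar and cubic ($m = 3n/2$ and $\sum_{C}|C| = 2m = O(n)$), this preprocessing costs $O(n)$ time. Second, for each internal vertex $i$ with neighbours $a,b,c$, solve the $3\times 3$ system $z_{ia}+z_{ib}+z_{ic}=1$, $\gamma_i = z_{ia}\gamma_a + z_{ib}\gamma_b + z_{ic}\gamma_c$; this is $O(1)$ per vertex and produces every $z_{ij}$ with $i$ internal. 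If some $z_{ij}=0$ we stop and reject, since then no positive weight $w_{ij}=s_iz_{ij}$ is possible and $\Gamma$ cannot be a weighted barycenter drawing; otherwise all the $z_{ij}$ are positive (they are non-negative by convexity of the faces, and we have just ruled out $0$). Third, for each strictly internal edge $(i,j)$ form $\zeta_{ij}=z_{ji}/z_{ij}$, and for each strictly internal face $C=(v_0,\ldots,v_{k-1},v_k=v_0)$ test whether $\prod_{i=0}^{k-1}\zeta_{v_i,v_{i+1}}=1$; accept if and only if every such test succeeds. By Theorem~\ref{th:cycleProduct} this decides exactly whether equations~(\ref{eq:scale}) have a nontrivial solution, i.e.\ whether $\Gamma$ is a weighted barycenter drawing, and by the size bounds each phase is linear, so the whole algorithm is $O(n)$.

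Correctness needs only one observation beyond Theorem~\ref{th:cycleProduct}: for a cubic graph the coordinates $z_{ij}$ entering the cycle condition~(\ref{eq:cycle}) are forced, hence they coincide with the values the algorithm computes and there is nothing to optimise over. This is precisely why the clean rank characterisation of Corollary~\ref{cor:cubic1} holds in the cubic case but fails for larger degrees. One could equally well implement the decision as the rank test of Corollary~\ref{cor:cubic1}, but arguing that solving that $(n-f_0)\times(n-f_0)$ system runs in linear time is less immediate, whereas the face-product test visits each edge a bounded number of times.

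The step that genuinely needs care is the model of computation. The product $\prod_{i}\zeta_{v_i,v_{i+1}}$ around a face can have bit-length proportional to the face length, so the linear-time claim should be read in the unit-cost (real-RAM) model; in a bit-complexity setting one would instead test the equivalent identity $\prod_{i=0}^{k-1} z_{v_i,v_{i+1}} = \prod_{i=1}^{k} z_{v_i,v_{i-1}}$ of~(\ref{eq:cycle}) after a suitable normalisation, or compare sums of logarithms, at the usual polylogarithmic overhead. A secondary, routine, issue is the degenerate configuration $z_{ij}=0$ (three collinear neighbours of an internal vertex), which the second phase already catches in a single $O(n)$ pass. Beyond this bookkeeping there is no real obstacle: the corollary is essentially Theorem~\ref{th:cycleProduct} specialised to the case of unique barycentric coordinates, combined with the standard linear-time primitives for planar embeddings.
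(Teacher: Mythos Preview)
Your proposal is correct and follows essentially the same approach as the paper: compute the unique barycentric coordinates $z_{ij}$ at each internal degree-$3$ vertex, then test the face-product condition~(\ref{eq:cycle}) of Theorem~\ref{th:cycleProduct} in linear time. The paper's own justification is a one-line remark to exactly this effect; your version simply spells out the implementation phases and adds the degeneracy check $z_{ij}=0$ and the real-RAM caveat, both of which are reasonable elaborations rather than departures.
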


For cubic graphs, the weights $z_{ij}$ are unique, and thus equations (\ref{eq:scale}) give a complete characterisation of weighted barycenter drawings. One can use Theorem~\ref{th:cycleProduct} to test whether a solution of equations (\ref{eq:scale}) exists,
checking equations~(\ref{eq:cycle}) in linear time.


\begin{corollary}
\label{cor:halin}
Suppose that $\Gamma$ is a convex drawing of a Halin graph such that the  internal edges form a tree.
Then $\Gamma$ is a weighted barycenter drawing.
\qed
\end{corollary}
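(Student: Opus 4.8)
The plan is to show that, under the stated hypothesis, $\Gamma$ has \emph{no} strictly internal face, so that the arithmetic condition in Theorem~\ref{th:cycleProduct} is satisfied vacuously and a nontrivial solution of equations~(\ref{eq:scale}) exists automatically. For a Halin graph drawn in the usual way --- a plane tree $T$ with no degree-$2$ vertices, together with a cycle through its leaves forming the outer face --- the internal vertices are exactly the non-leaf vertices of $T$ and the internal edges are exactly the edges of $T$, so the tree hypothesis is natural; in the argument I use only that the subgraph of internal edges is acyclic.

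The key step is the observation that $\Gamma$ has no strictly internal face. Suppose, for contradiction, that $C = (v_0, v_1, \ldots, v_{k-1}, v_k = v_0)$ is a strictly internal facial cycle. Then every $v_i$ is an internal vertex, hence every edge $(v_i, v_{i+1})$ of $C$ is incident to an internal vertex and is therefore an internal edge. Thus $C$ is a cycle lying entirely in the subgraph of $G$ formed by the internal edges; but that subgraph is a tree and hence acyclic, a contradiction. So there are no strictly internal faces.

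It remains to invoke Theorem~\ref{th:cycleProduct}. Since there are no strictly internal faces, condition~(\ref{eq:cycle}) holds trivially, so equations~(\ref{eq:scale}) have a nontrivial solution. As discussed in the paragraph preceding Theorem~\ref{th:cycleProduct}, convexity of $\Gamma$ guarantees non-negative (indeed positive) barycentric coordinates $z_{ij}$ for each internal vertex, a nontrivial solution of~(\ref{eq:scale}) can be scaled to a positive one, and the weights $w_{ij} = s_i z_{ij}$ are then positive and symmetric on strictly internal edges and satisfy the barycenter equations~(\ref{eq:bary0}). Hence $\Gamma$ is a weighted barycenter drawing.

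I do not expect a genuine obstacle: the whole proof is the ``no strictly internal face'' observation. The one place that warrants a sentence of care is checking that convexity forces the coordinates $z_{ij}$ to be strictly positive, so that the resulting edge weights are genuinely positive; this holds because in a convex drawing the neighbours of an internal vertex form a convex polygon strictly containing that vertex.
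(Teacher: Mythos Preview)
Your proposal is correct and follows essentially the same approach as the paper: observe that the tree hypothesis forces there to be no strictly internal facial cycles, so the condition in Theorem~\ref{th:cycleProduct} is satisfied vacuously and equations~(\ref{eq:scale}) admit a nontrivial solution. The paper's own proof is a one-liner to this effect (and also mentions the equivalent direct construction of the $s_i$ along the tree starting from $s_r=1$); your write-up is in fact more careful than the paper's in spelling out why the $z_{ij}$ can be taken strictly positive.
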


\subsubsection{Graphs with degree larger than three.}

For a vertex $i$ of degree $d_i > 3$, solutions for equations (\ref{eq:bary3}) are not unique.  Nevertheless, these equations are linear, and we have 3 equations in $d_i$ variables.
Thus, for each vertex $i$, the solution $z_{ij}, j \in N(i)$, form a linear space of dimension at most $d_i-3$. In this general case, we have:
\begin{corollary}
\label{co:general}
A drawing $\Gamma$ of a graph $G$ is a weighted barycenter drawing if and only if there are solutions $z_{ij}$ to equations (\ref{eq:bary3}) such that the cycle equation (\ref{eq:cycle}) holds for every internal face.
\qed
\end{corollary}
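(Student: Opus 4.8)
The plan is to read Corollary~\ref{co:general} off Theorem~\ref{th:cycleProduct} together with the reductions already set up in the previous section, so that both directions are short.

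For the ``if'' direction, suppose we are handed weights $z_{ij}$ that solve (\ref{eq:bary3}) and for which the cycle equation (\ref{eq:cycle}) holds on every strictly internal face. Rewriting (\ref{eq:cycle}) as (\ref{eq:cycleZeta}) with $\zeta_{ij} = z_{ji}/z_{ij}$, Theorem~\ref{th:cycleProduct} supplies a nontrivial solution $s_i$ of (\ref{eq:scale}), which by the remark preceding that theorem may be taken strictly positive. I would then set $w_{ij} = s_i z_{ij}$ for every directed edge incident to an internal vertex $i$: equation (\ref{eq:scale}) forces $w_{ij} = w_{ji}$ on every strictly internal edge, while an internal edge with an external endpoint carries a defined weight in only one orientation and so imposes no symmetry constraint; and dividing the scaled identity (\ref{eq:baryScaled}) recovers the barycenter equations (\ref{eq:bary0}) at every internal vertex. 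Hence $\Gamma$ is a weighted barycenter drawing.

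For the ``only if'' direction, suppose $\Gamma$ is a weighted barycenter drawing, with symmetric positive weights $w_{ij}$ on the internal edges. For each internal vertex $i$ set $\sigma_i = \sum_{j \in N(i)} w_{ij}$ and define $z_{ij} = w_{ij}/\sigma_i$; dividing (\ref{eq:bary0}) by $\sigma_i$ shows that these non-negative numbers solve (\ref{eq:bary3}). On a strictly internal face $C = (v_0, v_1, \ldots, v_k = v_0)$ every vertex is internal, so each $\sigma_{v_i}$ is defined and positive, and symmetry of $w$ gives $\zeta_{v_i,v_{i+1}} = z_{v_{i+1},v_i}/z_{v_i,v_{i+1}} = \sigma_{v_i}/\sigma_{v_{i+1}}$; the product of these ratios around $C$ telescopes to $1$, which is precisely (\ref{eq:cycleZeta}), equivalently (\ref{eq:cycle}).

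I do not anticipate a real obstacle, since the substantive work is already in Theorem~\ref{th:cycleProduct}; the only points needing care are that the barycentric coordinates $z_{ij}$ of (\ref{eq:bary3}) cease to be unique once a vertex has degree above three, so the statement must quantify existentially over solutions of (\ref{eq:bary3}) as it does, and that the cycle condition genuinely constrains only the \emph{strictly} internal faces --- an edge with an external endpoint has no second barycentric coordinate, and an internal vertex incident to the outer polygon is automatically a convex combination of its neighbours --- so ``internal face'' in the statement should be read as ``strictly internal face''.
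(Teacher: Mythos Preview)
Your proposal is correct and is exactly the intended derivation: the paper gives no explicit argument for Corollary~\ref{co:general} beyond declaring it straightforward, and you have filled in both directions in the natural way---invoking Theorem~\ref{th:cycleProduct} (together with the discussion around (\ref{eq:baryScaled})--(\ref{eq:scale})) for the ``if'' part, and normalising given weights $w_{ij}$ by $\sigma_i$ and telescoping for the ``only if'' part. Your closing remark that ``internal face'' must mean ``strictly internal face'' (since $z_{ji}$ is undefined when $j$ is external) is also a fair reading of the statement.
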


Although Corollary~\ref{co:general} is quite elegant, it does not lead to an immediately practical algorithm because the equations (\ref{eq:cycle}) are not linear.

\section{Conclusion}\label{se:conclusions}

Force-directed algorithms are very common in practice, and drawings obtained from force-directed methods are instantly recognisable
to most researchers in Graph Drawing.
However, this paper represents the first attempt to give algorithms to recognise the output of a particular force-directed method,
namely the weighted barycenter method.
It would be interesting to know if the results of other force-directed methods can be automatically recognised.

\subsubsection*{Acknowledgements.}
We wish to thank Veronika Irvine for motivating discussions.

\bibliographystyle{abbrv}
\bibliography{baryCentreBib}

\pagebreak

\section*{Appendix}
\
\subsection*{A triangulation which is not a weighted barycenter drawing.}

The weighted barycenter algorithm can be viewed as a force directed method, as follows.
We define the \emph{energy} $\eta(i,j)$ of an internal edge $(i,j)$ by
\begin{equation}
\label{eq:energyEdge}
\eta(i,j) = \frac{1}{2} w_{ij}  \delta (\gamma_i,\gamma_j)^2 = \frac{1}{2} w_{ij}  \left(   (x_i - x_j)^2 + (y_i-y_j)^2    \right)\\
\end{equation}
where $\delta(,)$ is the Euclidean distance and $\gamma_i = (x_i , y_i)$. The energy $\eta(\Gamma)$ in the whole drawing is the sum of the internal edge energies.
Taking partial derivatives with respect to each variable $x_i$ and $y_i$ reveals that $\eta(\Gamma)$ is minimised precisely when
the barycenter equations (\ref{eq:bary0}) hold.

\begin{figure}
  \centering
  \includegraphics[width=0.4\columnwidth]{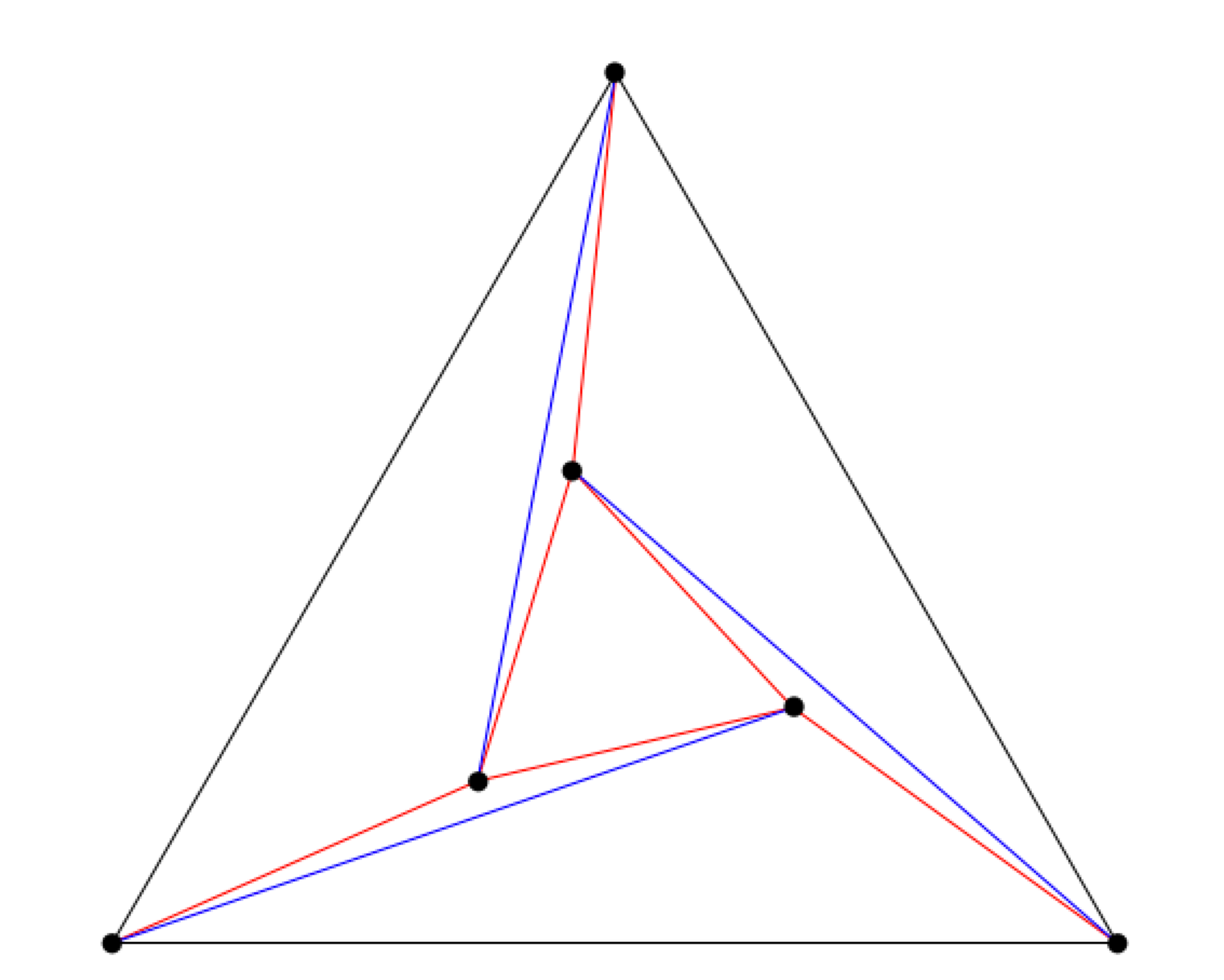}
  \caption{
 A triangulation which is not a weighted barycenter drawing.
  }
  \label{fi:triangles3}
\end{figure}
\begin{lemma}
The drawing in Fig.~\ref{fi:triangles3} is not a weighted barycenter drawing.
\end{lemma}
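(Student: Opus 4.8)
The plan is to use the characterisation already established: by Theorem~\ref{th:cycleProduct} (equivalently Corollary~\ref{co:general}), $\Gamma$ is a weighted barycenter drawing if and only if one can choose, for every internal vertex $i$, barycentric coordinates $z_{ij}\ge 0$ with $\sum_{j\in N(i)} z_{ij}=1$ and $\gamma_i=\sum_{j\in N(i)} z_{ij}\gamma_j$, in such a way that the cycle identity (\ref{eq:cycle}) holds on every strictly internal face of $\Gamma$. So I would assume for contradiction that such a choice exists and derive an impossibility from the concrete vertex positions in Fig.~\ref{fi:triangles3}.

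First I would list the internal vertices of Fig.~\ref{fi:triangles3} together with their neighbourhoods, and for each one solve the affine system (\ref{eq:bary3}) for its coordinates $z_{ij}$. A vertex of degree $3$ has a unique solution, read off directly from the coordinates; a vertex of degree $d_i>3$ gives a $(d_i-3)$-parameter affine family, and I would also record the sub-region of that family on which all its $z_{ij}$ are strictly positive --- this positivity is needed because a weighted barycenter drawing requires positive weights $w_{ij}=s_i z_{ij}$ with $s_i>0$. Next I would write identity (\ref{eq:cycle}) for each strictly internal face; since the facial cycles form a cycle basis (as used in the proof of Theorem~\ref{th:cycleProduct}), it is enough to impose one such equation per strictly internal face. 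Substituting the parametrised coordinates turns this into a small system of polynomial equations in the free parameters, and the aim is to show it has no solution inside the positivity region: typically because, on one internal face, the clockwise product is forced to be strictly larger than the counter-clockwise product no matter how the remaining freedom is used, or because two of the face equations are mutually inconsistent.

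The main obstacle is precisely this last step. Equations (\ref{eq:cycle}) are nonlinear, and at vertices of degree larger than three the $z_{ij}$ are not pinned down, so the contradiction must be argued uniformly over a whole family of coordinate assignments rather than for a single one; this is exactly the difficulty flagged after Corollary~\ref{co:general}. If the configuration in Fig.~\ref{fi:triangles3} happens to be such that the internal vertices incident to the relevant internal faces all have degree $3$, this difficulty disappears: the $z_{ij}$ are then uniquely determined by the coordinates, and it suffices to evaluate (\ref{eq:cycle}) on each strictly internal face and exhibit one face where the two products differ --- a finite exact computation, for which I would give the figure convenient rational coordinates. Either way, it is worth stressing that the drawing is convex (all its faces are triangles), so what fails is genuinely the cycle condition of Theorem~\ref{th:cycleProduct}, not convexity --- which is the whole point of the example.
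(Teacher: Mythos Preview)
Your approach is valid in principle but lands squarely in the hard case you yourself flag, and the paper takes a different and much shorter route.

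In Fig.~\ref{fi:triangles3} the three internal vertices form the inner triangle and each is adjacent to two outer vertices and two inner vertices, so every internal vertex has degree~$4$. Hence the $z_{ij}$ are \emph{not} unique: each internal vertex carries a one-parameter family of barycentric coordinates, and the only strictly internal face is the inner triangle itself. Your plan therefore reduces to showing that a single trilinear equation in three real parameters has no solution inside an open positivity region --- exactly the nonlinear obstacle you identify after Corollary~\ref{co:general}. This can be pushed through with explicit coordinates, but it is a genuine computation, not the clean ``evaluate and compare'' situation of the cubic case.

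The paper avoids all of this by using the energy formulation (\ref{eq:energyEdge}): a weighted barycenter drawing is the \emph{unique} minimiser of $\eta(\Gamma)=\sum_{(i,j)}\tfrac12 w_{ij}\,\delta(\gamma_i,\gamma_j)^2$ for its weights. One then observes that rotating the inner triangle rigidly by a small~$\epsilon$ keeps the three inner edges the same length while strictly shortening all six edges joining inner to outer vertices; since every $w_{ij}>0$, this strictly decreases $\eta$ for \emph{any} choice of positive weights, contradicting minimality. The strength of this argument is that it handles all weight assignments at once via a single geometric perturbation, sidestepping the parameter freedom that your approach must confront.
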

\begin{proof}
Suppose that the drawing $\Gamma$ in Fig.~\ref{fi:triangles3} is a weighted barycenter drawing with weights $w_{ij}$. 
The total energy $\eta(\Gamma)$ in the drawing is given by summing equation (\ref{eq:energyEdge}) over all internal edges,
and the drawing $\Gamma$ minimises $\eta(\Gamma)$.
Further, the minimum energy drawing is unique.
Consider the drawing $\Gamma'$ of this graph where the inner triangle is rotated clockwise by $\epsilon$, where $\epsilon$ is small.
The strictly internal edges remain the same length, while the edges between the inner and outer triangles become shorter.
Thus, since every $w_{ij} > 0$ for every such $(i,j)$,  $\eta(\Gamma') <\eta(\Gamma)$.
This contradicts the fact that $\eta$ is minimised at $\Gamma$.
\qed
\end{proof}

\subsection*{Positive solutions for equations (\ref{eq:scale}).}

\begin{lemma}
If equations (\ref{eq:scale}) have a nontrivial solution,
then they have a nontrivial solution in which every $s_i$ is positive.
\end{lemma}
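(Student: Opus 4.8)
The plan is to argue that if equations~(\ref{eq:scale}) admit any nontrivial solution, the strictly internal edges of $G$ decompose into connected ``classes'' on which the solution is determined up to a single positive scalar, and then to rescale each class so that all $s_i$ become positive. First I would observe that equation~(\ref{eq:scale}), rewritten as $s_i = \zeta_{ij} s_j$ with $\zeta_{ij} = z_{ji}/z_{ij} > 0$, says that along any edge of the subgraph $H$ of strictly internal edges, the ratio $s_i/s_j$ is a fixed \emph{positive} number. Hence on each connected component of $H$, once we fix the value of $s_i$ at one vertex, the values at all other vertices of that component are forced, and they all have the same sign as that chosen value (since each is obtained from it by multiplying by a product of positive $\zeta$'s along a path). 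For vertices $i$ that are internal but incident to no strictly internal edge, $s_i$ is unconstrained by~(\ref{eq:scale}), so we may simply set $s_i = 1$.

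Next I would take the given nontrivial solution $(s_i)$. Since it is nontrivial, some $s_i \neq 0$; let $K$ be the connected component of $H$ containing that vertex. By the previous paragraph every $s_j$ for $j \in K$ is nonzero, and all the $s_j$, $j\in K$, share a common sign; replacing $(s_j)_{j\in K}$ by $(-s_j)_{j\in K}$ if necessary — which still satisfies~(\ref{eq:scale}) on $K$, as the equation is homogeneous — we may assume they are all positive. The remaining components of $H$ may have been assigned all-zero values by the original solution; for each such component pick any vertex, assign it the value $1$, and propagate via $s_i = \zeta_{ij}s_j$ through that component, obtaining strictly positive values there. Because distinct components of $H$ share no strictly internal edge, these independent choices do not interfere with one another, and equations~(\ref{eq:scale}) continue to hold for every strictly internal edge. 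Finally set $s_i = 1$ for any internal vertex not in $H$. The result is a solution in which every $s_i$ is positive, and it is nontrivial since all entries are nonzero.

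The only subtlety — and the step I would be most careful about — is checking that the sign-propagation is \emph{consistent} around cycles of $H$, i.e.\ that flipping the sign on a whole component, or freely reassigning a previously-zero component, never forces a contradiction like $s_i = -s_i$. Consistency on the originally-nonzero component $K$ is automatic because we started from an actual solution there; consistency on the freshly-assigned components follows because we build the values by a spanning tree of the component and then must verify that each non-tree edge $(i,j)$ of that component also satisfies $s_i = \zeta_{ij}s_j$. But this is exactly the content of the cycle condition~(\ref{eq:cycleZeta}) from Theorem~\ref{th:cycleProduct}, which holds for every strictly internal cycle whenever~(\ref{eq:scale}) is solvable — so reusing that argument closes the gap. (Alternatively, and perhaps more cleanly for a self-contained appendix, one can avoid reassigning the zero components altogether: first show, using the homogeneity and connectivity argument above, that \emph{every} nontrivial solution is automatically nonzero on each component of $H$ that contains a vertex where it is nonzero, then note that a generic linear combination — or simply a coordinatewise absolute value, which works because the constraints are ``$s_i = (\text{positive})\cdot s_j$'' — converts it to a positive one.)
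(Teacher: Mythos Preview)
Your core observation---that equation~(\ref{eq:scale}) forces $s_i/s_j$ to equal a fixed positive number along every strictly internal edge, so signs propagate and one can negate globally if necessary---is exactly the paper's argument. The difference is that the paper's proof is three lines, because it uses a fact you did not invoke: since $G$ is triconnected and planar, the subgraph $H$ induced on the internal vertices is connected (the paper states this elsewhere in the same appendix; it is the classical fact that every facial cycle of a $3$-connected planar graph is non-separating). Hence $H$ has only one component, any nontrivial solution is automatically nowhere zero, and replacing $s$ by $-s$ if needed finishes the proof immediately.

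Your decomposition into components is therefore unnecessary, and your treatment of putative ``zero components'' has a genuine gap. Appealing to the cycle condition of Theorem~\ref{th:cycleProduct} to certify consistency on a component where the given solution vanishes is circular: the forward direction of that theorem deduces $\prod \zeta = 1$ from the relation $s_{v_0} = \bigl(\prod \zeta\bigr)\, s_{v_0}$, which requires $s_{v_0} \neq 0$ on the cycle---precisely what fails on a zero component. Your fallback of taking coordinatewise absolute values yields only a \emph{nonnegative} solution, still zero on those components, so it does not close the gap either. The clean repair is simply to use the connectivity of $H$, after which your argument collapses to the paper's.
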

\begin{proof}
Suppose that the vector $s$ is a solution to equations (\ref{eq:scale}),
and $s_i \leq 0$ for some internal vertex $i$.
Since $z_{ij} > 0$ for each $i,j$, and $G$ is connected, 
it is easy to deduce from (\ref{eq:scale}) that $s_j \leq 0$ for every internal vertex $j$.
Further if $s_i = 0$ for some internal vertex $i$ then $s_j = 0$ for every internal vertex.
Noting that $s$ is a solution to (\ref{eq:scale}) if and only if $-s$ is a solution, the Lemma follows.
\qed
\end{proof}

\subsection*{Properties of the coefficient matrix of the equations for the scale factors.}

Equations (\ref{eq:scale}) form a set of $m-f_0$ equations in the $n-f_0$ unknowns $s_i$.
We can write (\ref{eq:scale}) as
\begin{equation}
\label{eq:B}
B^T s=0
\end{equation}
where $s$ is an $(n-f_0) \times 1$ vector
and $B$ is an $(n-f_0) \times (m-f_0)$ matrix; precisely:
\[
B_{ie} = 
\begin{cases}
z_{ij} & \text{~if~} e \text{~is the edge~} (i,j) \in  \overrightarrow{E}\\
-z_{ij} & \text{~if~}e \text{~is the edge~} (j,i) \in  \overrightarrow{E}\\
0 & \text{otherwise.}
\end{cases}
\]
Note that $B$ is a weighted version of the directed incidence matrix of the graph $\overrightarrow{G}$.
Adapting a classical result for incidence matrices yields a lower bound on the rank of $B$:
\begin{lemma}
The rank of $B$ is at least $n - f_0 - 1$.
\end{lemma}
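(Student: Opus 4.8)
The plan is to reduce the statement to the classical rank theorem for the (unweighted) directed incidence matrix of a connected graph, and then observe that the nonzero weights $z_{ij}$ do not decrease the rank. Recall the classical fact: if $H$ is a connected directed graph on $N$ vertices, its vertex-edge incidence matrix has rank exactly $N-1$, the one-dimensional kernel of its transpose being spanned by the all-ones vector. Here $B$ is obtained from the incidence matrix of $\overrightarrow{G}$, restricted to the $n-f_0$ internal vertices and the $m-f_0$ strictly internal directed edges, by scaling the two nonzero entries in the column of edge $(i,j)$ by $z_{ij}$ (the entry is $+z_{ij}$ in row $i$ and, if $(i,j)$ is strictly internal, $-z_{ij}$ in row $j$). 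Note the scaling is by column, and within a column both nonzero entries carry the \emph{same} factor $z_{ij}$ — this is the key structural point that lets the classical argument go through.

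First I would make precise which graph plays the role of $H$. Let $G'$ be the subgraph of $G$ induced by the internal vertices; equivalently, $G'$ is obtained from $G$ by deleting the $f_0$ external vertices and all incident edges, so its edge set is exactly the set of strictly internal edges, and $B$ is (a weighted, oriented version of) the incidence matrix of $G'$. Because $G$ is triconnected, $G'$ is connected: removing the $f_0$ vertices of a single face from a $3$-connected planar graph cannot disconnect the rest — one can argue this directly, or invoke the standard fact that in a $3$-connected plane graph the interior of any face, together with its boundary, induces a connected subgraph, hence so does its complement minus that boundary. Let $c$ be the number of connected components of $G'$; then $c=1$.

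Next I would run the standard linear-algebra argument adapted to weights. Suppose $\lambda \in \ker B^T$, i.e. $B^T\lambda = 0$. Reading off the column of a strictly internal edge $(i,j)$ gives $z_{ij}\lambda_i - z_{ij}\lambda_j = 0$, hence $\lambda_i = \lambda_j$ since $z_{ij}>0$. As $(i,j)$ ranges over all edges of the connected graph $G'$, this forces $\lambda$ to be constant on $V(G')$, so $\dim\ker B^T \le 1$, whence $\operatorname{rank} B = \operatorname{rank} B^T \ge (n-f_0) - 1$. (In fact $\dim\ker B^T = c = 1$ here, so the rank is exactly $n-f_0-1$; but we only need the lower bound.) One subtlety to handle carefully: columns corresponding to internal-but-not-strictly-internal edges $(i,j)$ — where $j$ is external — contribute only a single nonzero entry $+z_{ij}$ in row $i$. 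Such a column forces $\lambda_i = 0$ for that $i$, which only \emph{shrinks} the kernel further, so it does no harm to the bound; I would simply remark that discarding these columns can only decrease the rank, so the bound obtained for $G'$ suffices.

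The main obstacle, such as it is, is the connectivity claim for $G'$: one must be sure that deleting the whole outer face from a triconnected planar graph leaves a connected graph. I expect this to be the one place needing a genuine (if short) topological or combinatorial argument rather than pure linear algebra; everything else is a routine transcription of the classical incidence-matrix rank proof, with the observation that per-column scaling by positive weights preserves the kernel structure.
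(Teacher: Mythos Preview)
Your proposal contains a genuine error in reading the matrix $B$. You assert that ``within a column both nonzero entries carry the \emph{same} factor $z_{ij}$'', and from this deduce that $B^T\lambda=0$ yields $z_{ij}\lambda_i - z_{ij}\lambda_j = 0$, hence $\lambda_i=\lambda_j$. But look again at the definition of $B$: for the column of an edge oriented as $(i,j)$, the entry in row $i$ is $z_{ij}$ while the entry in row $j$ is $-z_{ji}$, and in general $z_{ij}\neq z_{ji}$ (indeed the whole purpose of equations~(\ref{eq:scale}) is to find scale factors that repair this asymmetry). So the column equation is $z_{ij}\lambda_i - z_{ji}\lambda_j = 0$, which does \emph{not} force $\lambda$ to be constant; your ``per-column scaling'' reduction to the unweighted incidence matrix therefore does not go through.

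The kernel approach is nonetheless easy to rescue. From $z_{ij}\lambda_i = z_{ji}\lambda_j$ with $z_{ij},z_{ji}>0$ you still get that $\lambda_i$ is \emph{determined} by $\lambda_j$ (just not equal to it), so propagating along a spanning tree of the connected graph $G'$ on the internal vertices shows every $\lambda_k$ is a fixed nonzero multiple of $\lambda_r$ for a chosen root $r$; hence $\dim\ker B^T\le 1$ and $\operatorname{rank}B\ge (n-f_0)-1$. The paper takes the dual route: it selects the submatrix of $B^T$ whose rows correspond to the edges of a spanning tree of $G'$ and argues directly that this $(n-f_0-1)\times(n-f_0)$ matrix has full row rank, via the standard leaf-peeling induction (a leaf of the tree gives a column with a single nonzero entry, so one may strip off that row and column and recurse --- an argument indifferent to the actual nonzero weights). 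Both arguments rest on the same connectivity claim for $G'$, which you correctly flag as the only substantive point; the paper simply records it as a consequence of triconnectivity.
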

\begin{proof}
Since $G$ is triconnected, the induced subgraph of the internal vertices is connected.
Consider a submatrix of $B$ consisting of rows that correspond to the edges of a tree that spans the internal vertices.
It is easy to see that this $(n-f_0-1) \times (n-f_0) $ submatrix has full row rank (note that it has a column with precisely one nonzero entry).
The lemma follows.
\qed
\end{proof}
In fact, one can show (using the same method as in the proof of Theorem~\ref{th:cycleProduct}) that $B$ has rank exactly $n - f_0 - 1$
as long as the equations (\ref{eq:cycle}) hold for every strictly internal cycle.

\subsection*{Proofs of the corollaries.}

{\bf Corollary}~\ref{cor:cubic1}
\begin{proof}
In the case of a cubic graph, the weights $z_{ij}$ are unique. 
Thus, if the only solution to (\ref{eq:scale}) is $s_i = 0$ for every $i$, then the drawing is not a weighted barycenter drawing.
\qed
\end{proof}
%
{\bf Corollary}~\ref{cor:halin}
\begin{proof}

A \emph{Halin graph}~\cite{DBLP:journals/jgaa/Eppstein16} is a triconnected graph that consists of a tree, none of the vertices of which has exactly two neighbours, together with a cycle connecting the leaves of the tree. The cycle connects the leaves in an order so that the resulting graph is planar. A Halin graph is typically drawn so that the outer face is the cycle.

This corollary can be deduced immediately from Theorem~\ref{th:cycleProduct} since such a graph has no strictly internal cycles.
More directly, we can solve the equations (\ref{eq:scale}) starting by assigning $s_r = 1$ for the root $r$, and adding one edge at a time.
\qed
\end{proof}

\end{document}